\title{Line Formation by Fat Robots under Limited Visibility}
\author{Arijit Sil$^1$ \and Sruti Gan Chaudhuri$^2$}
\institute{$^1$ Meghnad Saha Institute of Technology, Kolkata, India\\ $^2$ Jadavpur University, Kolkata, India}
\begin{document} 

\maketitle              

\begin{abstract}
This paper proposes a distributed algorithm for a set of tiny unit disc shaped robot to form a straight line. The robots are homogeneous, autonomous, anonymous. They observe their surrounding up to a certain distance, compute destinations to move to and move there. They do not have any explicit message sending or receiving capability. They forget their past observed or computed data. The robots do not have any  global coordinate system or origin. Each robot considers its position as its origin. However, they agree on the X and Y axis. The robots are not aware of the total number of robots in the system. The algorithm presented in this paper assures collision free movements of the robots.  To the best of our knowledge this paper is the first reported result on line formation by fat robots under limited visibility. 
\keywords{Fat robots,
Oblivious
Line Formation, Limited Visibility}
\end{abstract}

\section{Introduction} 
One of the current trends of research in the field of Robotics is to replace a big robot by a group of small autonomous robots who work in coordination between themselves. The group of robots may perform many real time jobs like moving a big object, cleaning a big surface, guarding a geographical area etc. In theoretical point of view, one of the fundamental tasks for executing this kind of jobs is to form the geometric patterns on the plane by the robots. 
In this paper we address the problem of line formation by unit disc shaped robots or {\it fat robots}. 
The robots can sense their surrounding up to a certain range. They compute their destination locations by our proposed algorithm and move there. After reaching their destinations they forget all their past sensed and computed data. 

All reported line formation algorithms \cite{FPS2012book} for mobile robots considers that the robots as points and they are able to sense all other robots. 
A point robot neither creates any visual obstruction nor acts as an obstacle in the path of other robots. Czyzowicz et. al,\cite{Reference14} extended the traditional weak model of robots by replacing the point robots with unit disc robots (fat robots). Only some solutions on gathering problem has been reported for fat robots \cite{AGM2013PODC,GM2015JDA,GM2016book}. Under limited visibility gathering is solved for point robots \cite{FPSW2005TCS} and fat robots \cite{BKF2012SE}.
Dutta et. al \cite{Reference15} proposed a circle formation algorithm for fat robots assuming common origin and axes for the robots. Here the robots are assumed to be transparent in order to avoid visibility block. However, a robot acts as an physical obstacle if it falls in the path of other robots. The visibility range/radius of the robots is assumed to be limited.  
Datta et. al\cite{Reference16}, proposed another distributed algorithm for circle formation by a system of mobile asynchronous transparent fat robots with unlimited visibility. 

 In this paper we consider fat robots and propose a collision free movement strategy to form a line where the robots can only sense other robots up to  a finite distance.

\subsection{Underlying Model}

The robot model used in this paper is describes as follows. 

\begin{itemize}

\item The robots are autonomous.

\item Robots are anonymous and homogeneous i.e., they are not uniquely identifiable.

\item	 A robot is represented as a transparent disc with unit radius. The robots are transparent or see-through in order to ensure full visibility, but they act as physical obstructions for other robots.

\item	The robots do not have any global coordinate system. Each robot considers its position as its origin. They agree on the direction of $XY$ axes. The robots also can agree on unit distance (the radius of the robots can be considered as unit).

\item	 Every robot executes a cycle of three phases:
\begin{itemize}
\item {\bf Look} - the robot takes a snapshot around itself up to a finite range and identifies the other robots' positions w.r.t its own coordinate system; 
\item {\bf Compute} - based on other robot positions, the observer robot computes its destination; 
\item {\bf Move} - the robot moves to the destination point calculated in the previous phase. 
\end{itemize} 

\item The robots execute this cycle in semi-synchronous scheduling where an arbitrary set of robots look, compute and move simultaneously. This scheduling assures that when a robot is moving no other robot is observing it. 

\item The robots do not stop before reaching its destination ({\it rigid} motion).

\item	The robots are oblivious in the sense that they cannot remember any data from previous cycle.

\item A robot can see up to a fixed distance around itself comprised of a circular area centered at the center of the robot having radius length $rad_{v}$ on the 2D plane. 

\item The robots do not know about the total number of robots in the system.

\item	Robots cannot communicate using explicit messages. 
 
\item	The robots form a graph $G \big(V, E\big)$. Every robot is a vertex $v$ in $G$, where $v \in V$ of graph $G$. There exists an edge $e$, $e \in E$ between the robots $r_{i}$ and $r_{j}$ if and only if they can see each other. Initially, the graph $G$ is assumed to be connected which implies that every robot can see at least one other robot. 

\item Initially the robots are stationary and the mutual distance between two robots is atleast $\delta > 2 units$.

	
	
	
	         
	          
	              
	
	\end{itemize}

\section{Overview of the problem}
Let $R$ be a set of $n$ robots under the model described in previous section. The robots are assumed to be transparent in order to ensure full visibility, but they act as physical obstacles for other robots. A robot is represented by its center, i.e., by $r$ we mean a robot whose center is $r$. The robots in $R$ have to move in such a way that after a finite number of execution cycles, the robots in $R$ will form a straight line.    

When a robot $r\in R$ becomes active it first enters into look state. In this state $r$ takes a snapshot of the robots that are present within its visibility circle and plots those robots in its local coordinate system $Z_{r}$. This set of robots visible to $r$ is called the neighbours of $r$. With respect to $Z_{r}$, the set of robots visible to $r$ can be divided into eight distinct and non-overlapping sets.
 Refer to Fig. \ref{fig1}, 
 \begin{itemize}
 \item Set $A$ consists of the robots partially or fully present in the area of unit distance around  the positive $Y$ axis but it does not contain robot $r$ itself.
 \item Set $C$ consists of the robots partially or fully present in the area of unit distance around the positive $X$ axis.
 \item The set $F$ represents the robots partially or fully present in the area of unit distance around the negative $X$ axis and set $I$ contains the robots partially or fully present in the area of unit distance around negative $Y$ axis. 
 \item Set $B$ and $D$ account for the robots that fall within the first and second quadrant of the local coordinate system respectively (leaving the robots in $A$, $C$ and $I$). 
 \item The robots which are in third and fourth quadrant of the coordinate system $Z_{r}$ make the sets $E$ and $G$ respectively (leaving the robots on $A$, $F$ and $I$). 
 \end{itemize}

\begin{figure}[h]
\centering  
	\includegraphics[scale = 0.8]{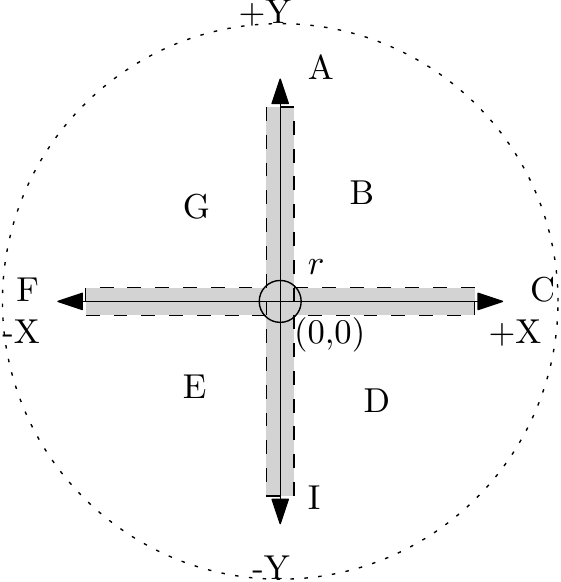}
\caption{Local view of a robot}
\label{fig1}
\end{figure}

In the first step of the computation phase $r$ calls a routine to check whether it should make a move in this cycle or not. The routine considers all possible scenarios and determines a destination point for $r$.

In the second step of the computation phase $r$ calls a second routine that returns the amount of horizontal and /or vertical shift that is allowed for $r$ in accordance with within $r$'s visibility circle.

In the final phase of the current cycle $r$ moves to the destination point computed in the previous phase. The movement ensures that it preserves connectivity and avoids collision with the neighbouring robots. 

\paragraph{\bf Vacant Point:}
A point is vacant if there exist no parts of another robot around a circular region of radius 1 around this point.

\paragraph{\bf Free path:}
A path of a robot is called free path, if from source to destination point (Refer to Fig. \ref{freepath}) the rectangular area having length as the source to destination distance and width as two units, is not contained any part of another robot. 
 \begin{figure}[h]
   \centering
   \includegraphics[scale=0.8]{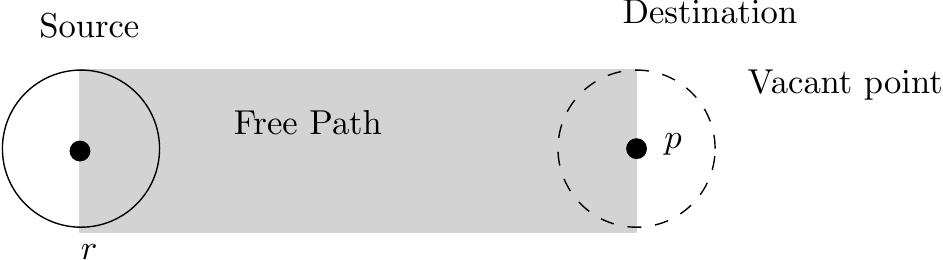}
   \caption{An example of free path of robot $r$ and Vacant point $p$}
   \label{freepath}
  \end{figure}

\section{Description of the Algorithm}

 The algorithm we have developed is executed by the robots  in semi- synchronous manner in computation phase and determines the distance and direction of their movement. The algorithm $LineForm()$ is divided into two subroutines. One subroutine, $NoMovement()$ identifies the cases where the robot will not move. whereas the other subroutine $getDestination()$ computes the destination points of the robots. Finally the robots move to this computed destinations
  
\paragraph{\bf NoMovement(r):}
$r$ will not move for the following configurations.
\begin{itemize}
\item If, $r$ does not find any other robot in its visibility circle \footnote{Which happens if and only if there is only one robot.}(Fig. \ref{nomove}(a)).
\item If $F$ is not empty (Fig \ref{nomove}(b)).
\item If sets $E \cup F \cup G$ are not empty (Fig \ref{nomove}(c)).
\item If $B \cup C \cup D \cup E \cup F \cup G$ is empty and $A \cup I$ is not empty. (Fig \ref{nomove}(d)).

\end{itemize}

\begin{figure}[h]
\centering  
	\includegraphics[scale = 0.8]{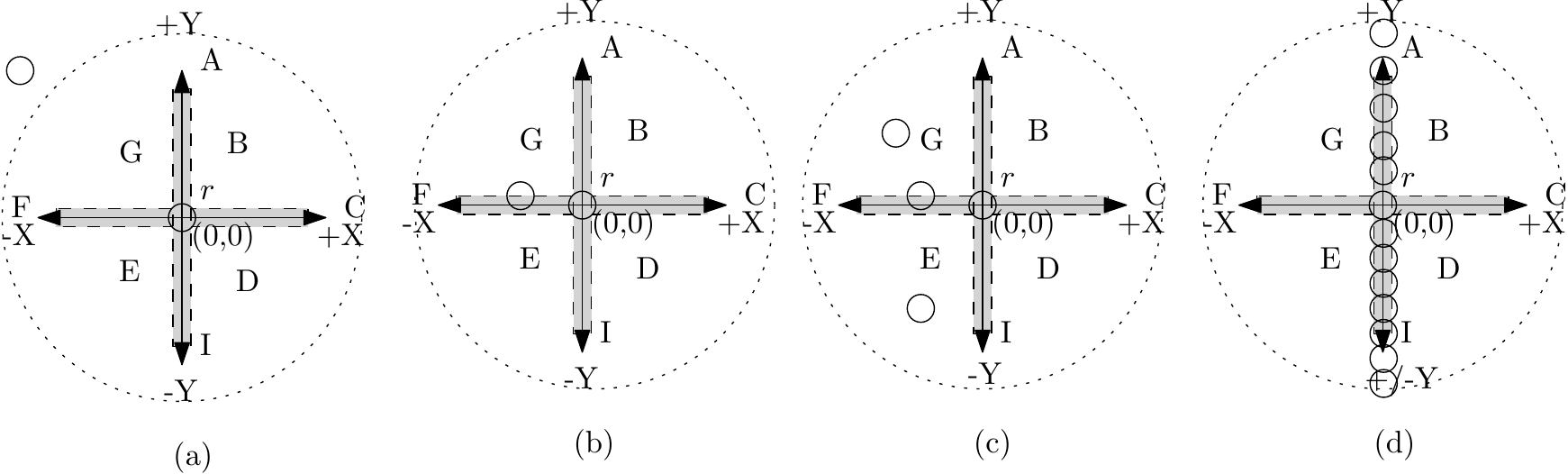}
\caption{No movement configurations for $r$}
\label{nomove}
\end{figure}

\vspace{-20pt}
\paragraph{\bf getDestination(r):} This subroutine considers all the robots that are visible to $r$ and ahead of it in the direction of positive $X$ axis (hereinafter referred as RIGHT). If $E \cup F \cup G$ is not empty the robot $r$ does not move. Hence, this routine takes into account only the robots in the sets $B$, $C$ and $D$. If $B \cup C \cup D$ is empty then the subroutine returns the current location of $r$ as the destination point. But if $B \cup C \cup D$ is not empty then the algorithm finds the nearest axis vertical to $X$ axis that contains one or more robots from $B \cup C \cup D$. Let, $\psi_{right}$  be that axis. It then considers four different scenarios.

Suppose, the robots partially of fully present in the area of unit distance around  the axis $\psi_{right}$ that are within the visibility range of $r$ forms the set $R_{\psi_{right}}$. The coordinate of $r$ is taken as $(0,0)$ w.r.t. its local coordinate system.

\begin{figure}[h!]
\centering  
	\includegraphics[scale = 0.8]{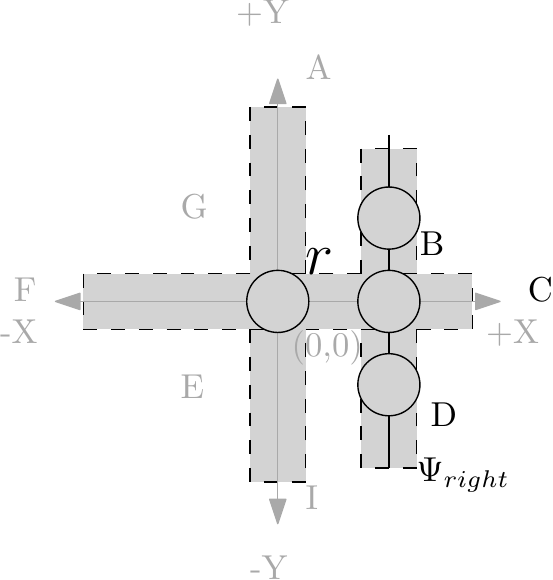}
\caption{An example of set $R_{\psi_{right}}$}
\label{fig2}
\end{figure}

\newpage
{\bf Scenario 1:} If $R_{\psi_{right}} \cap C$ is empty,  (Fig. \ref{sc1}), which means none of the robots in $R_{\psi_{right}}$ resides on $X$ axis, then the coordinates of the destination point $p$ is given by: 

           $x_{p} = 0 + \Delta x$ [where $\Delta x$ = offset along $X$ axis from current position of $r$ to the intersection point between $X$ axis and $\psi_{right}$].
          
           $y_{p} = 0$.
           
\begin{figure}[h]
\centering  
	\includegraphics[scale = 0.7]{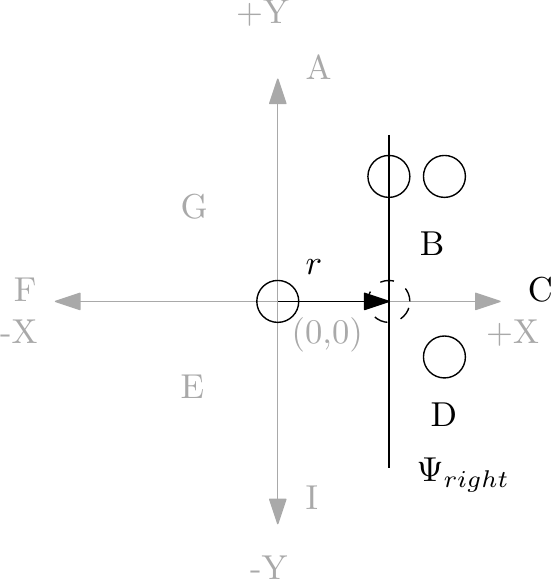}
\caption{Scenario 1}
\label{sc1}
\end{figure}

{\bf Scenario 2:} If $R_{\psi_{right}} \cap C$ is not empty and $R_{\psi_{right}} \cap B$ is not empty and$R_{\psi_{right}} \cap D$ is empty then $R_{\psi_{right}}$ has a robot on $X$ axis, it has got one or more robot that belongs to set $B$, but does not have any robot that belongs to set $D$. So, if $r$ only moves horizontally it will collide with the robot already present on the intersection point of $\psi_{right}$ axis and $X$ axis. So, its horizontal movement must be followed by a vertical movement towards the negative $Y$ direction. As the set $D$ does not contain any robot on $\psi_{right}$ axis, $r$ does not have to face collision with any other robot as it moves vertically. So the coordinates of the destination point $p$ is given by:

$x_{p} = 0 + \Delta x$ 

$y_{p} = 0 - \Delta y$ [where $\Delta y$ = the radius of robots]

If $p$ is not a vacant point, the subroutine re-computes $y_p$ as,

 $y_{p} = 0 - i\Delta y$ [where $\Delta y$ = the radius of robots, i=2,3..]. 
 
This process continues till a vacant point is found.
 
 If no vacant point is found in $R_{\psi_{Right}}\cap B$ using this procedure, then the vacant point can be found in the same manner in $R_{\psi_{Right}}\cap D$.

\begin{figure}[h]
\centering  
	\includegraphics[scale = 0.8]{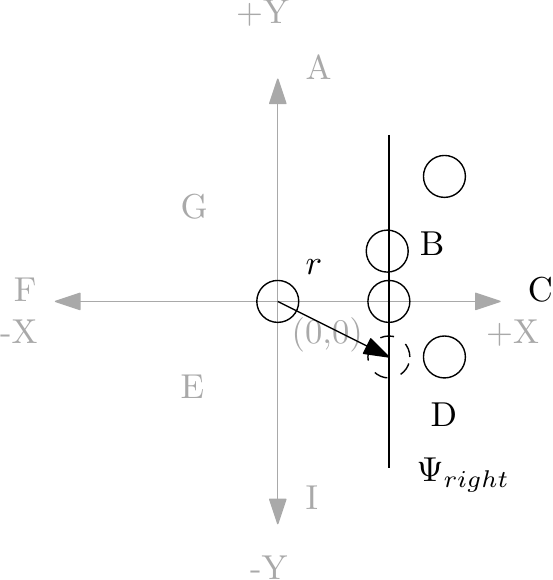}
\caption{Scenario 2}
\label{sc2}
\end{figure}

{\bf Scenario 3:} If $R_{\psi_{right}} \cap C$ is not empty and $R_{\psi_{right}} \cap B$ is empty and $R_{\psi_{right}} \cap D$ is not empty then $R_{\psi_{right}}$ has a robot on $X$ axis, it has no robot that belongs to set $B$, but does have one or more robot that belongs to set $D$. So, if $r$ only moves horizontally, just as in the previous scenario, it will collide with the robot already present on the intersection point of $R_{\psi_{right}}$ axis and $X$ axis. So, its horizontal movement must be followed by a vertical movement towards the positive $Y$ direction. 

So the coordinates of the destination point $p$ is given by:

$x_{p} = 0 + \Delta x$ 

$y_{p} = 0 + \Delta y$ [where $\Delta y$ = the radius of robots]

If $p$ is not a vacant point, the subroutine re-computes $y_p$ as,

 $y_{p} = 0 + i\Delta y$ [where $\Delta y$ = the radius of robots, i=2,3..]. 
 
This process continues till a vacant point is found.
 
 If no vacant point is found in $R_{\psi_{Right}}\cap D$ using this procedure, then the vacant point can be found in the same manner in $R_{\psi_{Right}}\cap B$.

\begin{figure}[h]
\centering  
	\includegraphics[scale = 1]{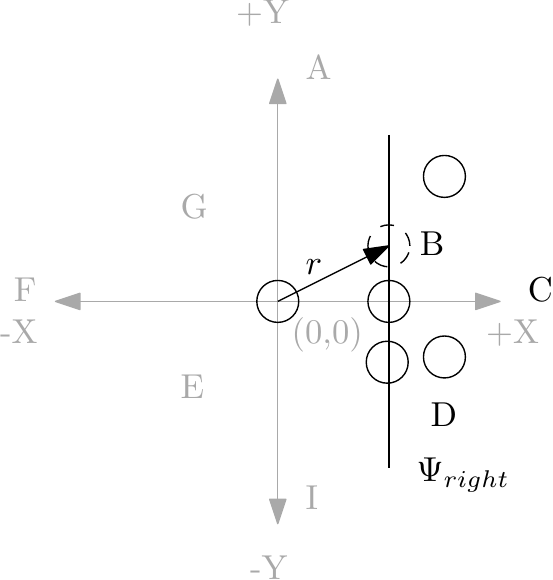}
\caption{Scenario 3}
\label{sc3}
\end{figure}

{\bf Scenario 4:} If there is no vacant point in $R_{\psi_{right}} \cap B$ and $R_{\psi_{right}} \cap C$ and $R_{\psi_{right}} \cap B$, then the robot $r$ computes its destination point as follows.
\begin{itemize}

\item If $I$ is completely empty, then $r$ moves towards either $-Y$ axis such that, 

$x_p = 0$.

$y_p = 0 - rad_v +1$.
 
\item If $I$ is not empty, then $r$ moves along $-Y$ axis such that,

$x_p = 0$.

$y_p = 0 - d+1$  where $d=$ the vertical distance with $r$ and the robot with maximum $y$ value in $I$. 

\end{itemize}

\begin{figure}[h]
\centering  
	\includegraphics[scale = 1]{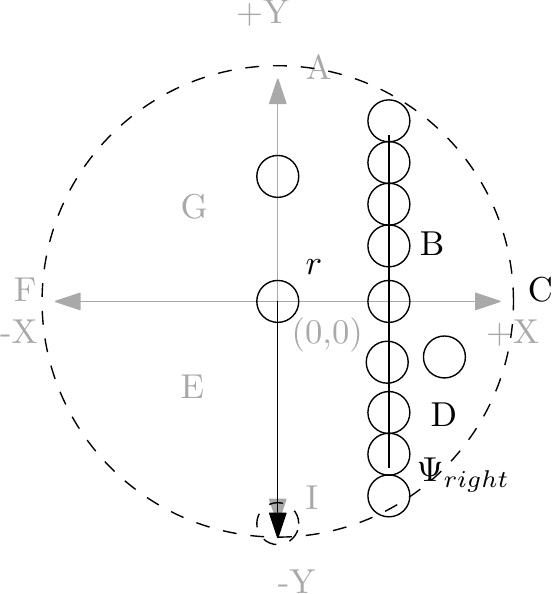}
\caption{Scenario 4}
\label{sc4}
\end{figure}

\section{Correctness}
The robots successfully form a straight line in finite time using our proposed algorithm. The algorithm is correct as it gives assurance of the following facts. 
\begin{itemize}
\item The visibility graph $G$ does not becomes disconnected.
\item The robots do not collide due to their movement strategies.
\item The robots do not fall into deadlock and form the line in finite time.
\end{itemize} 

Following lemmas are presented to prove these facts.
\begin{lemma}
\label{Lemma 1}
 The connectivity graph $G$ remains connected.
\end{lemma}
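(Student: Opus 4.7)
The plan is to argue that no movement allowed by the algorithm can sever an edge of $G$ without simultaneously preserving reachability through some other path; concretely, to verify the lemma by a case analysis over the four scenarios in getDestination, exploiting the pre-conditions enforced by NoMovement.

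First, I would record the invariants that hold at the moment a robot $r$ starts a cycle in which it actually moves. By NoMovement, $E \cup F \cup G$ is empty and $B \cup C \cup D$ is non-empty, so every current neighbour of $r$ lies in $A \cup B \cup C \cup D \cup I$ and the target axis $\psi_{right}$ lies within the visibility disc of radius $rad_v$. Semi-synchronous scheduling lets me analyse movers independently, since no robot observes while others are moving, and any simultaneous mover obeys the same invariants in its own local frame; the common $XY$ orientation is what ties these local arguments together.

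Second, I would handle Scenarios 1--3. In each, the horizontal jump $\Delta x$ is bounded by $|r\,\psi_{right}| \le rad_v$, and the small vertical correction $\pm i\,\Delta y$ is an integer multiple of the unit radius chosen only to land on a vacant point in the $\psi_{right}$ strip. For a neighbour in $B \cup C \cup D$, the shift weakly decreases the Euclidean distance, so visibility is preserved. For a neighbour $s \in A \cup I$, I need to bound the new distance $\sqrt{\Delta x^{2} + (y_{s} \mp \Delta y)^{2}}$ by $rad_v$; this follows from the fact that $s$ was within $rad_v$ of $r$ before, that the initial separation $\delta > 2$ keeps the strip structure coherent, and that $\Delta y$ is at most a couple of radii. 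Any edge that does drop is replaced by the new edge to a robot on $\psi_{right}$, which is itself connected to the former neighbour along the strip.

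Third, Scenario 4 is the delicate case and is where I expect the main obstacle. Here $r$ slides along $-Y$ by as much as $rad_v - 1$, which in isolation looks dangerous for a neighbour near the top of the visibility disc. I would argue that if $I \neq \emptyset$ the choice $y_{p} = -(d - 1)$ places $r$ at distance at most $rad_v - 1$ from the topmost robot of $I$, giving a direct survivor edge; former upward neighbours are then reached via this survivor together with the dense column on $\psi_{right}$, which by the defining condition of Scenario 4 contains no vacant point and hence forms a chain of overlapping visibility discs linking the old and new vertical levels of $r$. If $I = \emptyset$, the bound $y_{p} = -(rad_v - 1)$ still keeps some robot of the $\psi_{right}$ column within range, and the same chain argument takes care of former top neighbours. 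The hardest step will be formalising this ``$\psi_{right}$ column is a visibility chain'' claim: I would combine the no-vacant-point hypothesis, the unit-radius assumption, and the shared axis orientation to produce an explicit sequence $r = s_{0}, s_{1}, \ldots, s_{k}$ along $\psi_{right}$ with consecutive centres at distance $\le rad_v$ that reaches every former neighbour of $r$ in $A \cup B \cup C \cup D$ after the descent. Once this chain lemma is in place, all other scenarios reduce to elementary triangle-inequality estimates.
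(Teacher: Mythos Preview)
Your decomposition is the same as the paper's: a case analysis over the four scenarios of \texttt{getDestination}, together with the observation that \texttt{NoMovement} freezes every robot that sees $r$ to its left (so the $B\cup C\cup D$ neighbours are stationary while $r$ moves). The paper's actual argument is, however, far terser than yours. For Scenarios~1--3 it only gives the triangle observation that $r$'s destination lies strictly closer to any fixed point in $B\cup C\cup D$ than $r$'s start does (your ``weakly decreases the Euclidean distance'' line), and it says nothing about neighbours in $A\cup I$. For Scenario~4 it does not build a chain along $\psi_{right}$; it simply asserts that after the descent $r$ is still visible to some robot in $D$, and that the robots in $D$ are in turn connected to those in $A$, $B$, $C$. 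So your chain-lemma programme and your explicit treatment of $A\cup I$ go well beyond what the paper supplies.

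One genuine caution about your plan: the direct bound you propose for $s\in A\cup I$ in Scenarios~1--3, namely $\sqrt{\Delta x^{2}+(y_{s}\mp\Delta y)^{2}}\le rad_{v}$, is false in general. Take $s\in A$ with $|y_{s}|$ close to $rad_{v}$ and $\Delta x$ close to $rad_{v}$ (nothing in the algorithm forbids this); the new distance is then of order $rad_{v}\sqrt{2}$. So the edge $rs$ can indeed drop, and you must rely on your fallback ``replaced by a new edge to a robot on $\psi_{right}$, which is itself connected to the former neighbour'' clause. That fallback is the right idea and is exactly the style of argument the paper uses for Scenario~4, but neither you nor the paper justifies why a robot on $\psi_{right}$ must see the former $A$- or $I$-neighbour of $r$; if you want a proof more complete than the paper's you will have to supply that step yourself.
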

\begin{proof}
Consider scenario 1. The robots in $B$ or $C$ or $D$ will not move due to the presence of $r$ according to our $NoMovement()$ subroutine. Now we will show that when $r$ is moving to its destination it gets closer to the robots in $B$ or $C$ or $D$. Without loss of generality let us prove this by taking the existence of any robot in $B$. The same arguments hold for the presence of robots in $B$ or $C$ or $D$.

\begin{figure}[h]
\centering  
	\includegraphics[scale = 1]{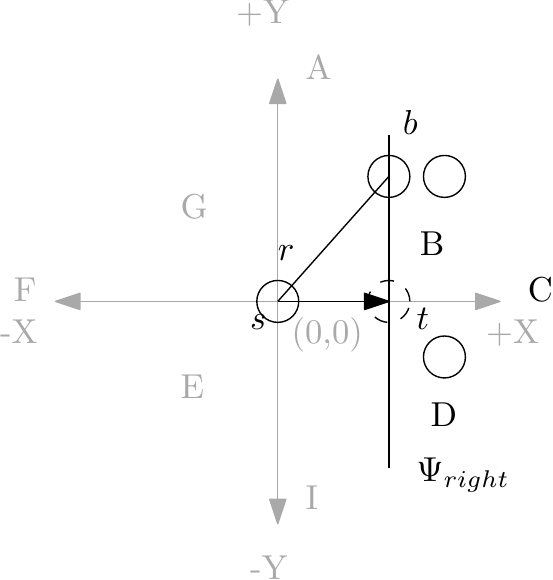}
\caption{An example of scenario 1 and connection preservation}
\label{cor1}
\end{figure}

Let $s$ be the starting location of $r$ (Fig. \ref{cor1}). Let there exists a robot $r_b$ in $B$ at $b$. Let $t$ be the destination of $r$. $r$ moves along the edge $st$ of the triangle $stb$. In triangle $sbt$ $|tb|<|sb|$ (since $sb$ is the diagonal.) Hence, when $r$ moves to $t$, it becomes closer to the robot present in $B$. 

Using the similar argument we can prove that under scenario 2 and 3, due to the movement of $r$, it becomes closer to the other robots present in $B$, $C$ and $D$, Hence, there is no chance to get disconnected with any robot. 

Now consider Scenario 4. No robots in $B$ or $C$ or $D$ moves following $No$ $Movement()$ subroutine. First consider the case when $I$ are empty. Then $r$ moves along $-Y$, $rad_v$ distance. Note that $r$ in its new position is connected with the robots in $D$. Note that the robots in $D$ in connected with $B$, $C$, $D$, $A$. Hence $G$ remains connected. Now suppose $I$ is not empty. $r$ moves towards $-Y$ till it touches the robots in $I$. Due to this movement $G$ does not become disconnected.
\qed
\end{proof}

\begin{lemma}
\label{Lemma 4} The robots never collide.
\end{lemma}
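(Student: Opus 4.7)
My plan is to prove no collision by case analysis on how many of any given pair of robots are moving at the same instant, using the foundational observation (also used in Lemma~\ref{Lemma 1}) that an active moving robot $r$ satisfies $E \cup F \cup G = \emptyset$, so every visible neighbour of a mover sits weakly to the east (or directly north/south) in $r$'s frame. Because all robots share the $XY$ axes, this asymmetry is what makes simultaneous motion tractable.

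First I would handle one mover $r$ and one stationary $r'$, going scenario by scenario and verifying (i) that the destination $p$ returned by \emph{getDestination(r)} is a vacant point, and (ii) that the straight segment from $r$ to $p$ sweeps out a free path in the sense defined in Figure~\ref{freepath}. In Scenario~1 the hypothesis $R_{\psi_{right}} \cap C = \emptyset$ keeps the $X$-axis strip clear; in Scenarios~2 and 3 the iterated search on $y_p$ terminates precisely at a vacant offset by construction; in Scenario~4, either the $-Y$ displacement of $rad_v - 1$ (when $I = \emptyset$) or the halting rule relative to the nearest robot of $I$ maintains centre-to-centre separation at least two, using the initial guarantee $\delta > 2$. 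In every scenario the swept rectangle only intersects the classified sets $B$, $C$, $D$, $I$, whose positions already drove the choice of $p$.

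Next I would handle two simultaneous movers $r$ and $r'$, splitting on whether they are mutually visible. If they are, the shared axes together with $E \cup F \cup G = \emptyset$ on both sides force $r' \in A \cup I$ of $r$ (otherwise $r$ would appear in $r'$'s forbidden western region): the two are vertically stacked with horizontal offset at most one. Their rightward horizontal components preserve the vertical separation, and any vertical shift performed under Scenarios~2, 3, or 4 is constrained by the unit step of the vacancy search, so the discs remain disjoint given $\delta > 2$. If $r$ and $r'$ are not mutually visible, their centres are more than $rad_v$ apart; since each destination lies inside its own visibility disc and both displacements are east-leaning, a triangle-inequality bound on the combined displacement rules out post-motion overlap. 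The main obstacle I foresee is the subcase of two vertically stacked visible movers both triggering vertical-shift scenarios based on disjoint local snapshots, which could a priori push them toward the same horizontal line; I would resolve it by exploiting the symmetry of the shared coordinate system to show that their chosen shift directions are compatible and that the net vertical approach is bounded by the initial gap of more than two units.
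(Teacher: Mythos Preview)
Your decomposition differs from the paper's. The paper does not analyse pairs of simultaneous movers at all; it argues scenario by scenario that the chosen destination is a vacant point and that the corridor to it is a free path (because no robot-bearing vertical line lies strictly between $r$'s $Y$-axis and $\psi_{right}$), and then invokes the single claim that ``in scenario 1, 2 and 3 no robot in the visibility circle moves other than $r$.'' That one sentence does the work of your entire mutually-visible-movers case, and it rests on exactly the observation you made: any neighbour in $B\cup C\cup D$ of $r$ sees $r$ in its own $E\cup F\cup G$ and is frozen by \emph{NoMovement}. For robots outside the visibility disc the paper simply asserts that the destination is chosen so they cannot interfere, without any triangle-inequality bookkeeping. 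So the paper's route is shorter; yours is structurally more honest about where the difficulties lie.

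The genuine gap in your proposal is precisely the ``main obstacle'' you flag. Two vertically stacked visible robots, each in the other's $A\cup I$, can both satisfy the movement preconditions, and your proposed resolution (shared-frame symmetry, bounded net vertical approach) is not made quantitative: one may fall into Scenario~2 and the other into Scenario~3, choosing vertical offsets toward one another on possibly different target lines $\psi_{right}$, and nothing in your sketch bounds the resulting centre distance below~$2$. Your treatment of non-visible pairs is also loose, since destinations in Scenarios~2 and~3 need not lie inside the visibility disc (the vacancy search on $y_p$ iterates $i=2,3,\ldots$ without an a priori bound), so the triangle-inequality step does not go through as stated. The paper's proof does not actually close the stacked-mover case either---its sole-mover claim is silent about $A\cup I$ neighbours---so you have correctly located the delicate point; you just have not discharged it, and neither does the paper.
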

\begin{proof}
In scenario 1, 2 and 3 the robot $r$ moves to its next visible vertical line $\psi_{right}$ only when there is a {\it vacant point} on it. 

Consider scenario 1, the $r$ moves to the intersection point of $+X$ axis and $\psi_{right}$. This point is vacant according to the algorithm. The path towards this point from $r$ is also a it {\it free path} as there is no vertical line between $y$ axis and $\psi_{right}$.

Consider scenario 2, the $r$ moves to the vacant point on $\psi_{right}$ at $B$. The path towards this point from $r$ is also a free path as there is no vertical line between $y$ axis and $\psi_{right}$.

Consider scenario 3, the $r$ moves to the vacant point on $\psi_{right}$ at $D$. The path towards this point from $r$ is also a free path as there is no vertical line between $y$ axis and $\psi_{right}$. 

In scenario 1,2 and 3 no robot in the visibility circle moves other than $r$. $r$ moves in such a way that it does not collide with any other robot.

Consider scenario 4, the $r$ moves down along $-Y$ axis if it has free path. Otherwise it does not move. Hence, there is no chance for collision.

Note that the destination point for $r$ is chosen in such a way that no robot which is not visible to $r$, can come in $r$'s path.

Hence for all movements of $r$, it does not collide with any other robot. 
 \qed
\end{proof}

\begin{lemma}
\label{Lemma 2}
There exists always a robot which will move unless the robots in $R$ forms a straight line.
\end{lemma}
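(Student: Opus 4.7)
My strategy is to prove the contrapositive: if every robot of $R$ is in a no-move configuration, then the robots already form a straight line. The argument centres on a canonically chosen ``extreme'' robot, exploiting the global agreement on the $X$ and $Y$ directions.

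First I would pick $r \in R$ as the robot of smallest $X$-coordinate and, among those, smallest $Y$-coordinate; with mutual distance $\delta > 2$ and a common axis agreement, such an $r$ is well defined. In $r$'s local frame, no other robot has $x < 0$, so $E$, $F$, $G$ are empty, and no other robot sits in $r$'s negative-$Y$ strip, so $I$ is empty as well. I would then run through the four clauses of $NoMovement(r)$. Clause~1 is excluded by Lemma~\ref{Lemma 1} when $|R| \ge 2$, since connectivity guarantees $r$ at least one visible neighbour (the case $|R|=1$ is a trivial line); Clauses~2 and~3 are excluded because $F$ and $E \cup F \cup G$ are already empty; Clause~4 requires $B \cup C \cup D = \emptyset$. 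Consequently, if $B \cup C \cup D$ is nonempty at $r$, none of the no-move clauses fires, $r$ proceeds to $getDestination(r)$ and moves under one of Scenarios~1--4, delivering the required mover.

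It remains to handle the case $B \cup C \cup D = \emptyset$ at $r$, in which every visible neighbour of $r$ lies in $A$, i.e.\ directly above $r$ inside a common width-$2$ vertical strip. I would then iterate the leftmost--bottommost argument along this strip: the bottommost robot $r_1 \in A$ visible to $r$ sees $r$ in its own $I$, and to remain stationary is again forced to have $B \cup C \cup D$ empty, pushing all of its visible neighbours into $A \cup I$ as well. Climbing this chain and invoking connectivity of $G$, every robot of $R$ is reached and must sit in the same width-$2$ vertical strip, which is the paper's notion of a straight line of unit-disc robots. The main obstacle is exactly this last step: translating the local ``only $A \cup I$'' property at each link of the chain into a single global strip, while properly accommodating the ``partially or fully present'' wording in the definitions of the eight sets and the subtleties introduced by limited visibility.
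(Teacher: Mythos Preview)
Your proof is correct and in fact more carefully structured than the paper's own argument, but the route differs. The paper does not single out an extremal robot: it considers an arbitrary robot $r$, asserts that if $r$ sees anything to its right it will move, and otherwise splits into three cases---a lone robot, a line already formed, or a neighbour in $A\cup I$ that is in turn connected to something off the vertical axis and hence will move. That last case is an implicit connectivity/propagation step, but the paper never identifies \emph{which} robot actually moves; the argument is essentially a one-line sketch.

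Your contrapositive with the globally leftmost--bottommost robot turns this into a direct construction: the extremal $r$ has no left neighbours, so either it moves outright or all of its visible neighbours lie in $A\cup I$, and the chain argument then forces the whole connected graph into a single vertical strip. This buys you a concrete witness for the mover and avoids the paper's somewhat circular ``there exists another robot which will move'' clause. Two small remarks: your appeal to Lemma~\ref{Lemma 1} for Clause~1 should really be to the model's standing connectivity assumption (Lemma~\ref{Lemma 1} is about connectivity being \emph{preserved}, not about its initial presence); and, as you yourself flag, the claims that $E\cup F\cup G$ and $I$ are empty for the leftmost--bottommost robot are not literally immediate under the ``partially or fully present'' definitions, since a disc centred just to the right of $r$ can protrude into the left or lower strips. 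With the separation $\delta>2$ and any reasonable tie-breaking among the eight (declared non-overlapping) sets, this is easily patched and does not affect the logic of your argument.
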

\begin{proof}
If a robot see any robot at its right side it will move. If it does not see any robot at $B \cup C \cup D$, it does not move. This is possible for following two cases.
\begin{itemize}
\item There is a single robot.
\item The robots have formed a straight line.
\item The robots do not form a straight line but there exist another robot in $A$ or $I$ which has the connectivity with the right side or left side of the $Y$ axis. For both the cases there exists robot other than $r$, which will move.
\end{itemize}
Hence, the lemma is true.
\qed
\end{proof}

\begin{lemma}
If the robots do not form a line yet it will leave its $Y$ axis after a finite time and move in the $+X$ direction.
\end{lemma}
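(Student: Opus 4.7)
The plan is to locate a robot whose activation in the next cycle necessarily carries it in the $+X$ direction, by exploiting the structure of the leftmost column of the global configuration. Because the robots agree on axes, the notion of a vertical column $x = c$ and of "leftmost column" is globally well-defined.

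First I observe that if all robots occupied a single column, they would already form a vertical straight line, contradicting the hypothesis; hence the configuration spans at least two distinct columns. Let $\mathcal{C}$ denote the leftmost such column. By Lemma~\ref{Lemma 1} the visibility graph $G$ is connected, and since $\mathcal{C}$ is not the whole robot set, at least one robot $r \in \mathcal{C}$ must have a neighbor off $\mathcal{C}$; otherwise $\mathcal{C}$ would form a disconnected component of $G$. For this chosen $r$, no robot lies strictly to its left, so $E = F = G = \emptyset$; meanwhile the off-column neighbor must lie in $B \cup C \cup D$, so $B \cup C \cup D \neq \emptyset$. Checking the NoMovement conditions one by one shows that none of them fires at $r$, so $r$ invokes getDestination in its Compute phase.

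Under Scenarios 1, 2, or 3 the destination $p$ computed for $r$ satisfies $x_p = \Delta x > 0$, and hence in this very cycle $r$ leaves its $Y$-axis in the $+X$ direction, establishing the lemma. The principal obstacle is Scenario~4, in which $r$ instead translates along $-Y$. I would dispose of it by a finiteness argument. Each Scenario-4 activation shifts $r$ strictly downward by a positive amount (either $rad_v - 1$ when $I$ is empty, or by the distance to the nearest robot in $I$ along $-Y$). After each such shift the subset of robots of the current $\psi_{right}$ that remain inside $r$'s visibility disc genuinely changes: robots near the top of the disc drop out of view, and, by Lemma~\ref{Lemma 4}, no collisions occur, so the only effect is a monotone downward repositioning of $r$ relative to a fixed finite robot set.

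Because the total number of robots is finite and their positions lie in a bounded vertical extent at the start of the Scenario-4 streak, after finitely many repetitions either the visible portion of $\psi_{right}$ inside $r$'s disc acquires a vacant point in $B$ or $D$ (so Scenario~2 or Scenario~3 fires, giving $\Delta x > 0$), or $\psi_{right}$ leaves $r$'s visibility disc entirely, promoting a new and farther $\psi_{right}$ — or collapsing the subroutine into Scenario~1, again giving $\Delta x > 0$. In every sub-case, $r$ leaves its $Y$-axis in the $+X$ direction after finitely many activations, which is the claim of the lemma. The delicate step, and the one I would write most carefully in the full proof, is the bookkeeping that shows the Scenario-4 iterations cannot recycle a previously-seen visible pattern on $\psi_{right}$ and thus must strictly progress toward a vacant point or an empty column.
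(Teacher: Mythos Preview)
Your argument is considerably more detailed than the paper's, which disposes of this lemma in a single sentence: ``Follows from Lemma~\ref{Lemma 2}.'' The paper simply defers to the existence of a moving robot and leaves implicit why that motion must eventually be in the $+X$ direction rather than the $-Y$ drift of Scenario~4. You instead take a constructive route: you isolate a specific witness $r$ on the globally leftmost column that sees an off-column neighbor, verify case-by-case that no \texttt{NoMovement} clause applies to it, and then track $r$ through the four scenarios of \texttt{getDestination}, arguing that the Scenario-4 downward slides can only repeat finitely often before a vacant point appears on $\psi_{right}$. What your approach buys is an explicit mechanism for termination of the $-Y$ drift; what the paper's approach buys is brevity, at the cost of leaving that mechanism entirely to the reader.

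One point in your Scenario-4 bookkeeping deserves tightening. You write that if $\psi_{right}$ leaves $r$'s visibility disc, a ``new and farther $\psi_{right}$'' is promoted or Scenario~1 fires. But if $r$ drifts far enough down that \emph{no} right-side robot remains visible, then $B \cup C \cup D$ becomes empty and \texttt{getDestination} returns $r$'s current position (or \texttt{NoMovement} condition~4 applies); $r$ itself then halts without ever having moved in $+X$. The repair is straightforward---by Lemma~\ref{Lemma 1} the graph stays connected, so some robot on the leftmost column still sees a right neighbor and you can re-select the witness---but the argument as written tracks a single $r$ throughout, and that commitment is not quite justified. Making the re-selection explicit, or recasting the finiteness argument in terms of the leftmost column as a whole rather than a fixed $r$, would close the gap.
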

\begin{proof}
Follows from lemma \ref{Lemma 2}.
\qed
\end{proof}

Given a set of robots $R$ on the 2D plane in its initial configuration, we may assume the existence of a line passing through the global right most robots. Let this line be the Right Most Axis (RMA) of initial configuration. In fact through our algorithm the robots finally are placed on the RMA and form the required straight line.

\begin{lemma}
\label{movetosir}
Each robot will move closer to RMA in finite time interval.
\end{lemma}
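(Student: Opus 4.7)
The plan is to track, for each robot $r$, the horizontal gap $d(r) = x_{\text{RMA}} - x_r$ and show that in finite time $r$ performs a $getDestination()$ call that falls under Scenario 1, 2, or 3, each of which shifts $r$ strictly by $+\Delta x > 0$ and hence decreases $d(r)$. I would first verify that the RMA is stable: any globally rightmost robot has $B \cup C \cup D = \emptyset$, so by $NoMovement()$ it never moves, and our algorithm never produces a leftward or a beyond-RMA displacement. Therefore $d(r) \ge 0$ is a bounded, non-increasing quantity, and it is enough to argue that every non-RMA robot is eventually unblocked.

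The core of the argument is to rule out the three obstructions that can prevent $r$ from moving rightward in a given cycle: (i) $E \cup F \cup G \neq \emptyset$ so $r$ has a left-neighbour; (ii) $B \cup C \cup D = \emptyset$ so $r$ has no right-neighbour within visibility; or (iii) $r$ enters Scenario 4 and moves only along $-Y$. I would handle (i) by induction on the number of robots lying weakly to the left of $r$ in the global frame: following the left-neighbour relation leads to a robot $r'$ with empty $E \cup F \cup G$; applying the main claim to $r'$ first, and observing that no robot ever moves leftward, we see that the set of left-neighbours of $r$ can only shrink until (i) fails.

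For case (ii), by Lemma~\ref{Lemma 1} the graph $G$ remains connected, so a path joins $r$ to some robot on the RMA. If $r$ sees no right-neighbour, the path to the RMA must exit through a neighbour in $A \cup I$, and along that path some robot closer to the RMA possesses a right-neighbour of its own. By case (i) that robot eventually moves rightward; a short geometric check, using that all rightward moves are bounded in length by $rad_v$, shows the movement either brings a robot into $B \cup C \cup D$ of $r$, or sets up a subsequent vertical-neighbour movement that does so. After finitely many such propagations, (ii) fails for $r$.

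The main obstacle, and where I would focus most care, is case (iii): bounding the number of consecutive Scenario 4 steps. Each such step shifts $r$ along $-Y$ by a fixed amount ($rad_v-1$ or $d-1$) without altering $x_r$. I would argue that after any Scenario 4 move either (a) $r$'s new $Y$-level relative to $\psi_{right}$ now admits a vacant point and the next cycle executes Scenario 2 or 3, or (b) at least one robot in the blocking set $R_{\psi_{right}}$ leaves $r$'s visibility window, strictly reducing the blocker. Since $R_{\psi_{right}}$ is finite and the vertical extent from which blockers are drawn is bounded by $rad_v$, case (b) can occur only finitely many times before case (a) triggers. Combining the three case analyses, $r$ is guaranteed a Scenario 1--3 move within a finite number of cycles, strictly decreasing $d(r)$ and completing the proof.
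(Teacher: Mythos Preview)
Your decomposition into the three obstructions (left-blocked, no visible right neighbour, Scenario~4 loop) is considerably more structured than what the paper does.  The paper's own argument is a two-sentence sketch: it asserts that a robot seeing something on its right executes one of Scenarios~1--3 and therefore advances to $\psi_{right}$, and that in Scenario~4 the downward shift changes the visibility window so that ``eventually $r$ moves towards right''; for the residual case it simply appeals to the existence of some other moving robot (essentially Lemma~\ref{Lemma 2}).  In particular the paper never explicitly confronts your case~(i): it writes as though a robot with a right neighbour always moves, glossing over the fact that $NoMovement()$ freezes any robot whose $E\cup F\cup G$ is nonempty.  So your route is genuinely different and more honest about where the difficulties lie; the paper's route buys brevity at the cost of leaving exactly the blocking case unargued.

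That said, your handling of case~(i) does not quite close.  The inductive hypothesis, applied to a robot $r'$ with strictly fewer robots weakly to its left, yields only that $r'$ makes \emph{one} rightward step in finite time; it does not by itself give that $r'$ ever crosses $r$ or leaves $r$'s $E\cup F\cup G$.  Your sentence ``the set of left-neighbours of $r$ can only shrink until (i) fails'' hides two issues: first, a robot invisible to $r$ and lying further left can move right into $r$'s $E\cup F\cup G$, so the visible left-neighbour set need not be monotone; second, even the global count $|\{r':x_{r'}<x_r\}|$ is merely non-increasing, and you have not shown it strictly decreases.  To repair this you would need either an additional argument that rightward increments cannot accumulate below the gap $x_r-x_{r'}$ (e.g.\ a positive lower bound on $\Delta x$ derived from the disc radii), or to strengthen the inductive statement to ``$r'$ reaches or passes the vertical line through $r$ in finite time''.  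Your cases~(ii) and~(iii) are sketched at roughly the level of rigour the paper itself uses for Scenario~4, so they are acceptable in context.
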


\begin{proof}
If a robot finds any robot at its right side it moves in $+X$ direction following scenarios  1, 2, 3 and moves to its $\psi_{right}$. 
As the robots do not stay idle for infinite time, it is guaranteed that the robots will reach to its $\psi_{right}$ in finite time, i.e., closer to RMA. 

In scenario 4, $r$ moves down along $-Y$ axis if there is a free path. When $r$ moves down its visibility circle also moves down and it covers a new set of robots. Eventually $r$ moves towards right and placed on the next vertical line nearer to RMA. If $r$ does not move down, there exists another robot to move and eventually $r$ gets its chance to move unless the straight line is already formed.  
\qed
\end{proof}

\begin{lemma}
\label{Lemma 3} None of the robots ever crosses the RMA of the set of robot.
\end{lemma}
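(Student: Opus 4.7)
The plan is to prove an invariant by induction on the completed cycles: at every instant, every robot has a global $x$-coordinate at most that of the RMA. Because the robots agree on the direction of the $X$ axis, a single global $x$-coordinate per robot is well defined, and the invariant holds at time zero by the very definition of the RMA as the vertical line through the initial globally rightmost robots.

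For the inductive step I would split on whether a moving robot $r$ currently sits on the RMA. If it does, then by the inductive hypothesis no other robot has strictly larger $x$-coordinate, so in $r$'s local frame the sets $B$, $C$, and $D$ are empty. A case analysis on $NoMovement(r)$ then shows that some clause always fires: clause~1 if $r$ sees nothing, clauses~2 or~3 if anything lies in $E \cup F \cup G$, and clause~4 if the only visible neighbours sit in $A \cup I$. Hence $r$ stays put, the RMA remains populated, and the invariant is preserved for $r$. If instead $r$ lies strictly to the left of the RMA and moves, its destination is computed by $getDestination(r)$: in scenarios~1--3 the destination's $x$-coordinate equals that of $\psi_{right}$, which by construction passes through some visible robot in $B \cup C \cup D$ whose $x$-coordinate is, by the inductive hypothesis, at most the RMA's; in scenario~4 the destination has $x_p = 0$ in the local frame, so the motion is purely vertical and the $x$-coordinate does not change. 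Either way the invariant is preserved for $r$, and unchanged for all other robots that stay idle this cycle.

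The main obstacle I anticipate is not conceptual but careful bookkeeping for the fat-robot geometry: one must check that $\psi_{right}$ is the vertical line through a robot \emph{centre}, and not through a stray disc boundary that could poke slightly past the RMA, so that the chain ``$x$-coordinate of $\psi_{right}$ equals $x$-coordinate of some visible robot, which is at most that of the RMA'' is exact rather than off by the radius. A secondary point worth flagging is that the argument implicitly leans on Lemma~\ref{Lemma 1} to conclude that at least one robot remains on the RMA from cycle to cycle, without which the statement ``does not cross the RMA'' would be vacuous; the case analysis above handles this because any robot currently on the RMA is pinned in place by $NoMovement$.
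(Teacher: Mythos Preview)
Your proof is correct and takes essentially the same approach as the paper: both split into the case where $r$ is already on the RMA (and then sees nothing to its right, so cannot move rightward) and the case where $r$ is strictly to the left (and then moves at most to $\psi_{right}$, which is bounded by the RMA). Your inductive framing and explicit treatment of scenario~4 and of the $NoMovement$ clauses are more careful than the paper's informal contradiction argument, but the underlying two-case analysis is identical.
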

\begin{proof}
Suppose there is a robot $r$ which has crossed RMA. $r$ can do that in two different ways. If $r$ was initially on RMA then it has left that axis to move to RIGHT or $r$ was initially LEFT of RMA and has crossed it while going towards RIGHT. In the first case to leave RMA $r$ has to observe an axis containing robots towards RIGHT. But as $r$ was sitting on RMA no such axis can exist and therefore it contradicts our assumption and therefore once on RMA, $r$ cannot leave it anymore. In the second case $r$ was LEFT of RMA and in order to go past RMA it has to find RMA to be the nearest axis containing robots towards RIGHT as there is no other axis with robots RIGHT of RMA. But if RMA is the nearest axis then the maximum horizontal shift would take $r$ up to RMA and not beyond that and once it reaches RMA, $r$ cannot leave it anymore. Therefore, the scenario contradicts our assumption. So, by contradiction we can say that none of the robots ever crosses RMA.
\qed
\end{proof} 

\begin{lemma}
\label{finite}
All the robots in $R$ will be on the RMA in finite time.
\end{lemma}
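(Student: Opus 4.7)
The plan is to combine the preceding lemmas via a monotone potential argument. I would define a potential $\Phi$ equal to the total number of ordered pairs (robot, occupied vertical line) where the line lies strictly to the right of the robot. Since there are at most $n$ robots and at most $n$ distinct occupied vertical columns at any time, $\Phi \le n^2$ always. The goal is then to show that $\Phi$ strictly decreases after finitely many cycles and that $\Phi = 0$ forces every robot onto the RMA.

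First, I would use Lemma \ref{Lemma 2} to guarantee that, until the line is formed, in every cycle some robot is eligible to move, and the semi-synchronous fairness assumption ensures such a robot is activated. Lemma \ref{Lemma 3} ensures that no robot ever moves past the RMA, so the rightmost occupied column is fixed throughout the execution and no new vertical column can appear to the right of any robot -- the set of occupied columns can only shrink (when a column loses its last remaining robot).

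Next, I would split movement into two cases according to the subroutine \emph{getDestination}. In Scenarios 1, 2, 3 the mover jumps horizontally from its current column to $\psi_{right}$, which is already occupied, so the mover's own contribution to $\Phi$ strictly decreases while no other robot gains a new right-side column; hence $\Phi$ strictly drops. In Scenario 4 the mover only shifts downward, leaving all $x$-coordinates unchanged and $\Phi$ unchanged. Here Lemma \ref{movetosir} is the crucial input: it guarantees that every robot reaches a strictly nearer column in finite time, so no robot can stay in a perpetual chain of Scenario 4 moves without eventually triggering a Scenario 1--3 move. Consequently, in any finite time window a Scenario 1--3 event occurs and $\Phi$ drops.

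The main obstacle is the bookkeeping under the semi-synchronous scheduler: one must verify that $\Phi$ is well-defined and weakly monotone even when several robots move in the same round, and confirm that Scenario 4 cannot create new columns (it cannot, since only $y$-coordinates change) nor ruin Lemma \ref{Lemma 1}'s connectivity guarantee, which is needed to rule out robots drifting off to infinity downward. Once these points are verified, $\Phi$ is a non-negative integer bounded by $n^2$ that strictly decreases within every finite window, so after finitely many cycles $\Phi = 0$. By definition this means no robot has a robot strictly to its right, i.e.\ every robot shares the same $x$-coordinate; combined with Lemma \ref{Lemma 3}, this common column is the RMA, proving the lemma.
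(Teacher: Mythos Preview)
Your argument is sound and actually more careful than the paper's own proof, which is a three-line sketch: it simply invokes Lemma~\ref{movetosir} to say that every robot reaches its $\psi_{right}$ in finite time, infers that eventually the RMA becomes $\psi_{right}$ for every robot, and concludes that all robots then land on the RMA. No potential function, no explicit bookkeeping of columns, no discussion of simultaneous activations.

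Your route differs in that you introduce the integer potential $\Phi$ and argue monotonicity directly from the movement rules, using Lemma~\ref{movetosir} only to rule out an infinite run of Scenario~4 moves. What this buys you is an explicit termination bound ($\Phi\le n^2$) and a proof that does not rely on the somewhat vague step ``eventually RMA is everyone's $\psi_{right}$''. The paper's version is shorter but leans heavily on Lemma~\ref{movetosir} doing the real work; yours makes the finiteness mechanism visible. One point you should state explicitly to close the argument: in Scenarios~1--3 the destination $x$-coordinate is exactly that of $\psi_{right}$, a line already occupied by some robot, so the set of occupied $x$-values never grows even under simultaneous moves; this is what guarantees $\Phi$ cannot increase when several robots hop right in the same round.
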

\begin{proof}
According to lemma \ref{movetosir}, each robot reach its $\psi_{right}$ in finite time. This implies that after a finite time there will be configuration when RMA will be the $\psi_{right}$ for each robot. After this configuration in finite time all robots will move to RMA. 
\qed
\end{proof}

 \section{Conclusion}
       Finally we can summarize the result in the following theorem
       \begin{theorem}
       A set of asynchronous, oblivious fat robots can form a straight line under limited visibility and one axis agreement without collision.
       \end{theorem}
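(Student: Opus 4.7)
The plan is to treat the theorem as the packaging of the preceding correctness lemmas into a single statement: the three requirements built into the theorem---correct use of the asynchronous/oblivious model, collision-freeness, and termination in a line configuration---each match up with a lemma, so the proof amounts to citing them in the right order and verifying that their hypotheses hold throughout the execution.

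First I would dispose of the two invariants. Lemma~\ref{Lemma 4} gives collision-freeness directly: in Scenarios~1--3 the destination is a vacant point reached through a free path, and in Scenario~4 the downward move is issued only when the path is free; since the no-movement rules ensure that at most $r$ is moving among its own visible neighbourhood during a round, no other robot can intrude into $r$'s path either. Lemma~\ref{Lemma 1} then guarantees that the visibility graph $G$ stays connected across every computed move, which is what allows the global Right Most Axis (RMA) to remain a meaningful target throughout the run and rules out the degenerate outcome of forming several disjoint line segments in isolated components.

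For termination, I would chain Lemmas~\ref{Lemma 2}, \ref{movetosir}, \ref{Lemma 3}, and~\ref{finite}. Lemma~\ref{Lemma 2} supplies progress: as long as the configuration is not already a straight line, at least one robot has a non-trivial destination to compute. Lemma~\ref{movetosir} refines this to say that the progress is oriented toward RMA, because every activated robot either shifts to its $\psi_{right}$ or, via Scenario~4, shifts down so that its future $\psi_{right}$ will be closer to RMA. Lemma~\ref{Lemma 3} is the crucial boundedness ingredient: no robot ever crosses RMA, so RMA is invariant across the execution and serves as a fixed target. Lemma~\ref{finite} finally concludes that every robot reaches RMA in finite time, and since RMA is a vertical line by construction, the robots end up collinear on it.

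The main obstacle I expect is not any individual citation but ruling out schedule-induced starvation: one would like a strictly decreasing integer potential---for instance the number of distinct vertical axes occupied by robots strictly to the left of RMA, with ties broken by the count of robots on each such axis---and then argue that whenever the rightmost not-yet-on-RMA robot is activated this potential drops by at least one, while Scenario~4 re-enables a stalled robot in a bounded number of rounds. Semi-synchronous fairness ensures every robot is activated infinitely often, so the potential must reach zero in finite time, and combining this with collision-freeness and connectivity preservation yields the theorem. \qed
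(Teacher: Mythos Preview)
Your proposal is correct and matches the paper's approach: the paper presents the theorem purely as a summary of the preceding lemmas (``Finally we can summarize the result in the following theorem'') without any further proof text, so packaging Lemmas~\ref{Lemma 1}, \ref{Lemma 4}, \ref{Lemma 2}, \ref{movetosir}, \ref{Lemma 3}, and~\ref{finite} in the order you give is exactly what is intended. Your final paragraph on a decreasing potential and fairness against starvation actually goes beyond anything the paper writes down; it is a welcome tightening, but be aware that the paper itself relies only on the informal argument inside Lemma~\ref{finite} and does not exhibit such a measure.
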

      The future scope of this work would be to find the possibility of other pattern formation by fat robots or study the same problem removing the axis agreement or placing the robots uniformly distributed on the line.

\end{document}